\newcommand{\NP}{\ensuremath{\mathrm{NP}}}
\newcommand{\W}{\ensuremath{\mathrm{W}}}
\newcommand{\FPT}{\ensuremath{\mathrm{FPT}}}
\newcommand{\tw}{{\mathbf{tw}}}
\newcommand{\pw}{{\mathbf{pw}}}
\newcommand{\vc}{{\mathbf{vc}}}
\newcommand{\dist}{\mathrm{dist}}
\newcommand{\diam}{\mathrm{diam}}
\newtheorem{theorem}{Theorem}
\newtheorem{lemma}{Lemma}
\newtheorem{proposition}{Proposition}
\begin{document}
\title{Finding vertex-surjective graph 
homomorphisms\thanks{A preliminary version of this paper appeared as an extended abstract in the proceedings of
CSR 2012. The work was supported by 
 EPSRC (EP/G043434/1 and EP/G020604/1) and the Royal Society (JP090172).
The 2nd author was  supported by Charles University as GAUK~95710.}
}

\author{
Petr A. Golovach\thanks{School of Engineering and Computing Sciences, Durham University, Durham, U.K. \texttt{petr.golovach@durham.ac.uk}} ,
Bernard Lidick\'y\thanks{Department of Mathematics, University of Illinois, Urbana, USA and Department of Applied Mathematics, Charles University in Prague, Czech Republic \texttt{bernard@kam.mff.cuni.cz}}, \\
Barnaby Martin\thanks{School of Engineering and Computing Sciences, Durham University, Durham, U.K. \texttt{barnaby.martin@durham.ac.uk}}, 
and~Dani\"el Paulusma\thanks{School of Engineering and Computing Sciences, Durham University, Durham, U.K. \texttt{daniel.paulusma@durham.ac.uk}}
}
%

\maketitle
\vspace{-0.45cm}
\begin{abstract}
 The {\sc Surjective Homomorphism} problem  is to test whether a given graph
$G$ called the guest graph allows a vertex-surjective homomorphism to some other given graph $H$ called the host graph. The bijective and injective homomorphism problems can be formulated in terms of spanning subgraphs and subgraphs, and as such their computational complexity has been extensively studied.
What about the surjective variant?
Because this problem is   \NP-complete in general, we restrict the guest and the host graph to belong to graph classes ${\cal G}$ and ${\cal H}$, respectively. We  determine to what extent a certain choice of ${\cal G}$ and ${\cal H}$ influences 
its computational complexity.
We observe 
that the problem is polynomial-time solvable if ${\cal H}$ is the class of paths, whereas it is \NP-complete if ${\cal G}$ is the class of paths. Moreover, we show that the problem
is even \NP-complete on many other elementary graph classes, namely
linear forests, unions of complete graphs, cographs,
proper interval graphs, split graphs and trees of pathwidth at most 2. In contrast, we prove that the problem is fixed-parameter tractable in $k$ if ${\cal G}$ is the class of trees and ${\cal H}$ is the class of trees with at most $k$ leaves, or if ${\cal G}$ and ${\cal H}$ are equal to the class of graphs with vertex cover number at most~$k$.
\end{abstract}

\section{Introduction}\label{s-intro}
We  consider undirected finite graphs that are {\it simple}, i.e., have no loops and no multiple edges.
A graph is denoted $G=(V_G, E_G)$, where $V_G$ is the set of vertices and $E_G$ is the set of edges.
A {\it homomorphism} from a graph $G$
to a graph $H$  is a mapping $f: V_G \to V_H$ that maps adjacent vertices of $G$ to adjacent vertices of $H$,  i.e., $f(u)f(v)\in E_H$ whenever $uv\in E_G$.
Graph homomorphisms are widely studied within the areas of graph theory and algorithms; for a survey we refer to  the monograph of Hell and Ne\v{s}et\v{r}il~\cite{HN04}.
The {\sc Homomorphism} problem is to test whether there exists a homomorphism from a graph $G$ called the {\it guest graph} to a graph $H$ called the {\it host graph}. If $H$ is restricted to be in the class of complete graphs (graphs with all possible edges), then this problem is equivalent to the {\sc Coloring} problem. The latter problem is to test whether a graph $G$ allows a {\it $k$-coloring} for some given $k$, i.e., a mapping $c:V_G\to \{1,\ldots,k\}$, such that $c(u)\neq c(v)$ whenever $uv\in E_G$. This is a classical \NP-complete problem~\cite{GareyJ79}. Hence, the {\sc Homomorphism} problem is \NP-complete in general, and it is natural to restrict the input graphs to belong to some special graph classes. 

We let ${\cal G}$ denote the class of guest graphs and ${\cal H}$ the class of host graphs that are under consideration, and denote the corresponding decision problem by  {\sc $({\cal G},{\cal H})$-Homomorphism}. If $G$ or $H$ is the class of all graphs, then
we use the notation ``$-$'' to indicate this. If ${\cal G}=\{G\}$ or ${\cal H}=\{H\}$, we write 
$G$ and $H$ instead of ${\cal G}$ and ${\cal H}$, respectively,
The Hell-Ne\v{s}et\v{r}il dichotomy theorem~\cite{HN90} states that $(-,H)$-{\sc Homomorphism} is solvable in polynomial time if $H$ is bipartite, and \NP-complete otherwise.
In the context of graph homomorphisms, a  graph $F$ is called a {\it core} if there exists no homomorphism from $F$ to any proper subgraph of $F$. 
Dalmau et al.~\cite{DalmauKV02} proved that the $({\cal G},-)$-{\sc Homomorphism} problem can be solved in polynomial time if all cores of the graphs in ${\cal G}$ have bounded treewidth.
Moreover, Grohe~\cite{Gr07} showed that under the assumption $\FPT\neq\W[1]$, the problem can be solved in polynomial time if and only if this condition holds.

As a homomorphism $f$ from a graph $G$ to a graph $H$ is a (vertex) mapping, we may add further restrictions, such as requiring it to be {\it bijective}, {\it injective}, or {\it surjective} i.e., for each $x\in V_H$ there exists exactly one, at most one, or at least one vertex $u\in V_G$ with $f(u)=x$,  respectively.
The decision problems corresponding to the first and second variant are known as the
{\sc Spanning Subgraph Isomorphism} and {\sc Subgraph Isomorphism} problem, respectively. As such, 
these two variants have been well studied in the literature.
For example, the bijective variant contains the problem that is to test whether a graph contains a Hamiltonian cycle as a special case.
In our paper, we research the third variant, which
leads to the following decision problem:

\medskip
\noindent
{\sc Surjective Homomorphism}\\
{\it Instance:} two graphs $G$ and $H$.\\
{\it Question:} does there exist a surjective homomorphism from $G$ to $H$?

\medskip
\noindent
If the guest $G$ is restricted to a graph class ${\cal G}$ and  the host $H$ to a graph class ${\cal H}$, then we denote this problem by {\sc Surjective $({\cal G},{\cal H})$-Homomorphism}. 
Fixing the host side to a single graph $H$ yields the  {\sc Surjective $(-,H)$-Homomorphism} problem.
This problem is \NP-complete already when $H$ is nonbipartite. This follows from a simple reduction from the corresponding $(-,H)$-{\sc Homomorphism} problem, which is \NP-complete due to the Hell-Ne\v{s}et\v{r}il dichotomy theorem~\cite{HN90};
we replace an instance graph $G$ of the latter problem by the disjoint union $G+H$ of $G$ and $H$, and observe that $G$ allows an homomorphism to $H$ if and only if $G+H$ allows a surjective homomorphism to $H$. 
For bipartite host graphs $H$, the complexity classification of {\sc Surjective $(-,H)$-Homomorphism} is still open, although some partial results are known. 
For instance, the problem can be solved in polynomial time whenever $H$ is a tree. This follows from a more general classification that also includes trees in which the vertices may have self-loops~\cite{GPS11}.
On the other hand,  there  exist cases of bipartite host graphs $H$ for which the problem is \NP-complete, e.g., when $H$ is the graph obtained from a 6-vertex cycle with  one distinct path of length 3 added to each of its six vertices~\cite{BKM11}.
Recently,  the {\sc Surjective $(-,H)$-Homomorphism} problem has been shown to be \NP-complete when $H$ is a 4-vertex cycle with a self-loop in every vertex~\cite{MP11}.  Note that in our paper we only consider simple graphs. For a survey on 
 the {\sc Surjective $(-,H)$-Homomorphism} problem from a constraint satisfaction point of view we refer to the paper of Bodirsky, K\'{a}ra and Martin~\cite{BKM11}.
Below we discuss some other concepts that are closely related to 
surjective homomorphisms.

A homomorphism $f$ from a graph $G$ to a graph $H$ is {\it locally
surjective} if $f$ becomes surjective when restricted to the neighborhood of every vertex $u$ of $G$, i.e., $f(N_G(u))=N_H(f(u))$.  The corresponding decision is called the {\sc Role Assignment} problem which has been classified for any fixed host $H$~\cite{FP05}.
Any locally surjective homomorphism is surjective if the host graph is connected but the reverse implication is not true in general.  For more on locally surjective homomorphisms and the locally injective and bijective variants, we refer to the survey of  Fiala and Kratochv\'il~\cite{FK08}.

Let $H$ be an induced subgraph of a graph $G$.
Then a  homomorphism $f$ from a graph $G$ to $H$
is a {\it retraction} from $G$ to $H$ if $f(h)=h$ for all $h\in V_H$. 
In that case we say that $G$ {\it retracts to} $H$. By definition, a retraction from $G$ to
$H$ is a surjective homomorphism from $G$ to $H$. Retractions are well studied; see e.g. the recent complexity classification of Feder et al.~\cite{FHJKN10} for the corresponding decision problem when $H$ is a fixed pseudoforest.
In particular, polynomial-time algorithms for retractions have been proven to be a useful subroutine for obtaining polynomial-time algorithms for the {\sc Surjective $(-,H)$-Homomorphism} problem~\cite{GPS11}.

We emphasize that  a surjective homomorphism is  {\it vertex-surjective} as opposed to the stronger condition of being edge-surjective. A homomorphism from a graph $G$ to a graph $H$ is called {\it edge-surjective} or 
a {\it compaction} if for any edge $xy\in E_H$ there exists an edge $uv\in E_G$ with $f(u)=x$ and $f(v)=y$. If $f$ is a compaction from $G$ to $H$, we also say that $G$ {\it compacts} to $H$.
The {\sc Compaction} problem is to test whether a graph $G$ compacts to a 
graph~$H$. Vikas~\cite{Vi02,Vi04,Vi05} determined the computational complexity of 
$(-,H)$-{\sc Compaction} 
 for several classes of fixed host graphs $H$. 
Very recently, Vikas~\cite{Vi11} 
considered $(-,H)$-{\sc Compaction} for
guest graphs belonging to some restricted graph class.

\medskip
\noindent
{\bf Our Results.}
We study the {\sc Surjective $({\cal G},{\cal H})$-Homomorphism} problem for several graph classes ${\cal G}$ and ${\cal H}$.  
We observe that this problem is polynomial-time solvable when the host graph is a path, whereas
it becomes \NP-complete if we restrict the guests to be paths instead of the hosts. 
We also show that the problem is \NP-complete when both 
${\cal G}$ and ${\cal H}$ are restricted to trees of pathwidth at most 2, and when both ${\cal G}$ and ${\cal H}$ are linear forests. These results are in contrast to the aforementioned polynomial-time result of 
Dalmau et al.~\cite{DalmauKV02} on $({\cal G},-)$-{\sc Homomorphism} for graph classes ${\cal G}$ that consists of graphs, the cores of which have bounded treewidth. They are also in contrast to the aforementioned polynomial-time result on {\sc Surjective $(-,H)$-Homomorphism} when $H$ is any fixed tree~\cite{GPS11}. 

Due to the hardness for graphs of bounded treewidth, it is natural to consider other width parameters such as the clique-width of a graph. For this purpose we first consider the class
of complete graphs that are exactly those graphs that have clique-width 1. We observe that the
{\sc Surjective $({\cal G},{\cal H})$-Homomorphism} can be solved in polynomial time when
${\cal G}$ is the class of complete graphs, whereas the problem becomes \NP-complete when
we let ${\cal G}$ and ${\cal H}$ consist of the unions of complete graphs.  We then focus on graphs that have clique-width at most two. This graph class is equal to the class of cographs~\cite{CourcelleO00}.
There exist only a few natural problems that are difficult on cographs. We prove that {\sc Surjective $({\cal G},{\cal H})$-Homomorphism}, where ${\cal G}$ and ${\cal H}$ are equal to the class of connected cographs, is one of these.
We also consider proper interval graphs.  This graph class has unbounded tree-width and contains the classes of complete graphs and paths. Because they are ``path-like'', often problems that are difficult for general graphs are tractable for proper interval graphs. In an attempt  to generalize  our polynomial-time result for {\sc Surjective $({\cal G},{\cal H})$-Homomorphism}  when
${\cal G}$ is the class of complete graphs, or when  ${\cal H}$ is the class of paths, we consider connected proper interval graphs. It turns out that {\sc Surjective $({\cal G},{\cal H})$-Homomorphism} is \NP-complete even  when ${\cal G}$ and ${\cal H}$ consist of these graphs.
Our last hardness result shows that the problem is also \NP-complete when ${\cal G}$ and ${\cal H}$ are equal to the class of split graphs. All hardness results can be found in Section~\ref{sec:NPc}.

To complement our hardness results, we show in Section~\ref{sec:FPT} that  
{\sc Surjective $({\cal G},{\cal H})$-Homomorphism} 
is fixed-parameter tractable in $k$, when ${\cal G}$ is the class of trees and ${\cal H}$ is the class of trees with 
at most $k$ leaves, and also when ${\cal G}$ and ${\cal H}$ consist of graphs with vertex cover number at most  $k$. The latter result adds further evidence 
that decision problems difficult for graphs of bounded treewidth may well be tractable if the vertex cover number is bounded; also see e.g.~\cite{AdigaCS10,EncisoFGKRS10,FellowsLMRS08,FialaGK11}.  Moreover, the vertices  
of such graphs can be partitioned into two sets, one of them has size bounded by the vertex cover number and the other one is an independent set.
As such, they resemble split graphs with bounded clique number.
We refer to Table~\ref{tabl:compl} for a summary of our results.
In this table, $\pw$ and $\vc$ denote the pathwidth and the vertex cover number of a graph, respectively. 
In Section~\ref{sec:defs} we explain these notions and the complexity class \FPT . There, we also give the definitions of all the aforementioned graph classes.

\begin{table}
\begin{center}
\vskip-5mm
\begin{tabular}{|l|l|l|l|}
\hline
\hspace*{1.2cm}$\cal G$ & \hspace*{1.2cm}$\cal H$ &Complexity &\\
\hline
complete graphs  &all graphs & polynomial time &Proposition~\ref{prop:H-complete} (i)\\
\hline
all graphs & paths & polynomial time &Proposition~\ref{prop:H-complete} (ii)\\
\hline
paths & all graphs & \NP-complete &Theorem~\ref{t-all} (i)\\
\hline
linear forests & linear forests & \NP-complete  &Theorem~\ref{t-all} (ii)\\
\hline
unions of complete graphs & unions of complete graphs & \NP-complete  &Theorem~\ref{t-all} (iii)\\
\hline
connected cographs & connected cographs & \NP-complete &Theorem~\ref{t-all} (iv)\\
\hline
trees of $\pw\leq2$ & trees of $\pw\leq2$ & \NP-complete &Theorem~\ref{t-all} (v)\\
\hline
split graphs  & split graphs & \NP-complete &Theorem~\ref{t-all} (vi)\\
\hline
connected proper  & connected proper & \NP-complete &Theorem~\ref{t-all} (vii)\\
interval graphs & interval graphs & &\\
\hline
trees & trees with $k$ leaves & \FPT~in $k$  &Theorem~\ref{thm:trees-FPT}\\
\hline
graphs of $\vc\leq k$ & graphs of $\vc\leq k$ & \FPT~in $k$ &Theorem~\ref{thm:vcn-FPT}\\
\hline
\end{tabular}
\vskip2mm
\caption{Complexity of {\sc $({\cal G},{\cal H})$-Surjective Homomorphism.}}\label{tabl:compl}
\end{center}
\vskip-10mm
\end{table}

\section{Definitions and Preliminaries}\label{sec:defs}
Let $G$ be a graph.
 The \emph{open neighborhood} of a vertex $u\in V_G$ is defined as $N_G(u) = \{v\; |\; uv\in E_G\}$, and
its \emph{closed neighborhood} is defined as $N_G[u] = N_G(u) \cup \{u\}$. The degree of a vertex
$u\in V_G$ is denoted $d_G(u)=|N_G(u)|$.
The \emph{distance} $\dist_G(u,v)$ between a pair of vertices $u$ and $v$ of
$G$ is the number of edges of a shortest path between them. 
The \emph{distance} between a vertex $u$ and a set of vertices $S\subseteq V_G$ is
$\dist_G(u,S)=\min\{\dist_G(u,v)|v\in S\}$. 
We may omit subscripts if this does not create any confusion.
The \emph{diameter} of $G$ is defined as $\diam(G)=\max\{\dist_G(u,v)|u,v\in V_G\}$.
Let $S\subseteq V_G$. Then the graph $G-S$ is the graph obtained from $G$ by removing all vertices in $S$. If $S=\{u\}$, we also write $G-u$. The subgraph of $G$ that is {\it induced} by $S$ has vertex set $S$ and edges $uv$ if and only if $uv\in E_G$. We denote this subgraph by $G[S]$.

A graph is an {\it interval graph} if intervals of the real line can be
associated with its vertices in such a way that two vertices are adjacent if and only
if their corresponding intervals overlap.    An interval graph is {\it proper} if it has an interval representation, in which no interval is properly contained in any other interval.
The disjoint union of two graphs $G$ and $H$ is denoted $G+H$, and
the disjoint union of $r$ copies of $G$ is denoted $rG$.
A {\it linear forest} is the disjoint union of a collection of paths.
We denote the path on $n$ vertices by $P_n$.
A graph is a \emph{cograph} if it does not contain $P_4$ as an induced subgraph.
A {\it clique} is the vertex set of a complete graph. A vertex set is {\it independent} if its vertices are mutually non-adjacent.
A graph is a \emph{split graph} if its vertex set can be partitioned into a clique and an independent set.

A \emph{tree decomposition} of a graph $G$ is a pair $({\cal X},T)$ where $T$
is a tree and ${\cal X}=\{X_{i} \mid i\in V_T\}$ is a collection of subsets (called {\em bags})
of $V_G$ such that the following three conditions are satisfied:
\begin{enumerate}
\item $\bigcup_{i \in V_T} X_{i} = V_G$;
\item for each edge $xy \in E_G$, the vertices $x,y$ are in a bag $X_i$ for some  $i\in V_T$;
\item for each $x\in V_G$, the set $\{ i \mid x \in X_{i} \}$ induces a connected subtree of $T$.
\end{enumerate}
The \emph{width} of tree decomposition $({\cal X},T)$ is $\max_{i \in V_T}\,\{|X_{i}| - 1\}$. The \emph{treewidth} of a graph $G$, denoted $\tw(G)$, is the minimum width over all tree decompositions of $G$.  If in these two definitions we restrict the tree $T$ to be a path, 
then we obtain the notions of \emph{path decomposition} and {\em
pathwidth} of $G$ denoted $\pw(G)$.

For a graph $G$, a set $S\subseteq V_G$ is a \emph{vertex cover}
of $G$, if every edge of $G$ has at least one of its two endvertices in $S$.
Let $\vc(G)$ denote the \emph{vertex cover number}, i.e., the minimum size of a vertex cover of $G$.

We use the following well-known notion in parameterized complexity, where one considers the problem input as a pair $(I,k)$, where $I$ is the main part and $k$ 
the parameter; also see the text book of 
Flum and Grohe~\cite{FlumG06}. 
 A problem is \emph{fixed parameter tractable} if an instance $(I,k)$ can be solved
in time $O(f(k) n^c)$, where $f$ denotes a computable function, $n$ denotes the size of $I$, and $c$ is a constant independent of $k$.  
The class \FPT\  is the class of all fixed-parameter tractable decision problems.

We finish this section by giving the polynomial-time results from Table~\ref{tabl:compl}.  
The proof of statement (ii) of Proposition~\ref{prop:H-complete} is similar to the corresponding proof for the edge-surjective variant shown by Vikas~\cite{Vi11}. 

\begin{proposition}\label{prop:H-complete}
The {\sc Surjective $({\cal G},{\cal H})$-Homomorphism} problem can be solved in polynomial time in the following two cases:
\begin{itemize}
\item[(i)] 
${\cal G}$ is the class of complete graphs and ${\cal H}$ is the class of all graphs;
\item[(ii)] ${\cal G}$ is the class of all graphs and ${\cal H}$ is the class of paths.
\end{itemize}
\end{proposition}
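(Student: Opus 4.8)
The plan is to handle the two cases separately; case (i) is essentially immediate, while case (ii) carries all the work.

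For (i), let $G=K_n$. I would first observe that any homomorphism $f$ from $K_n$ to a (loopless) graph $H$ must be injective and have clique image: every pair of vertices of $K_n$ is an edge, so for distinct $u,v$ we need $f(u)f(v)\in E_H$, which forbids $f(u)=f(v)$ and forces all images to be pairwise adjacent. Hence a \emph{surjective} homomorphism from $K_n$ to $H$ exists if and only if $V_H$ is itself a clique of size exactly $n$, i.e.\ $H\cong K_n$, in which case the identity map works. The algorithm simply tests whether $H$ is complete and $|V_H|=|V_G|$, which is clearly polynomial.

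For (ii) the host is a path $P_m$ on vertices $1,\ldots,m$ with edges between consecutive integers, so a homomorphism $f$ is exactly a level assignment with $|f(u)-f(v)|=1$ for every edge $uv$. Since $P_m$ is bipartite, $G$ must be bipartite (otherwise output NO), and I would then argue component by component: a surjective homomorphism onto $P_m$ is precisely a choice of a homomorphism on each connected component of $G$ whose images jointly cover $\{1,\ldots,m\}$. The key structural facts concern a single connected bipartite component $C$. First, the image of $C$ under such an $f$ is always a set of consecutive integers: along any path between two vertices the level changes by $\pm 1$, so a discrete intermediate-value argument shows every level in between is attained; thus the image is an interval whose only relevant statistic is its length, which I call the \emph{span}. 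Second, I would pin down the achievable spans: the bipartition gives a homomorphism of span $2$ (span $1$ exactly when $C$ is a single vertex), while $f(u)=\dist_C(s,u)$ is a valid homomorphism of span equal to the eccentricity of $s$ plus one (in a bipartite graph adjacent vertices have BFS-distances from $s$ differing by exactly $1$), and choosing $s$ of maximum eccentricity yields span $\diam(C)+1$. No larger span is possible, since if $f(u)$ and $f(v)$ are the extreme levels then $f(u)-f(v)\le\dist_C(u,v)\le\diam(C)$. Finally, every intermediate span is realizable by composing $\dist_C(s,\cdot)$ with the ``folding'' homomorphism of $\mathbb{Z}$ onto $\{0,\ldots,s-1\}$, so the set of achievable spans is exactly $\{1\}$ when $C$ is a single vertex and $\{2,\ldots,\diam(C)+1\}$ otherwise.

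With these facts in hand, each component can be translated onto an interval of any achievable length placed anywhere inside $\{1,\ldots,m\}$, independently of the others, so the problem reduces to a covering question: can one cover $\{1,\ldots,m\}$ by choosing for each component $C_i$ an interval of achievable length at most $m$? A greedy left-to-right placement shows that such a cover exists if and only if every component fits (its minimum span, $1$ or $2$, is at most $m$) and $\sum_i \min\{\diam(C_i)+1,\,m\}\ge m$ (with the summand $1$ for isolated vertices); the necessity direction is immediate since a cover forces the chosen lengths to sum to at least $m$. Bipartiteness testing, finding components, and computing diameters are all polynomial, so this yields the claimed running time. I expect the main obstacle to be the single-component analysis: proving the image is always an interval, identifying the maximum span as $\diam(C)+1$, and especially verifying via the folding construction that \emph{all} intermediate spans are attainable, since the clean covering argument relies on this contiguity.
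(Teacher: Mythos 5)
Your proof is correct and takes essentially the same route as the paper's: part (i) via the observation that a homomorphic image of a complete graph is a clique of the same size, and part (ii) via the per-component analysis that the achievable spans are exactly $\{1\}$ or $\{2,\ldots,\diam(C)+1\}$ (realized by the BFS-distance map with folding), reducing to the covering condition $\sum_i(\diam(G_i)+1)\geq \ell$, which is equivalent to the paper's criterion. The only cosmetic differences are that you make the interval-image and greedy-covering steps explicit where the paper states them more tersely.
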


\begin{proof}
We first prove (i).  Let $G$ be a complete graph and $H$ be an arbitrary graph.
We claim that there exists a surjective homomorphism from $G$ to $H$ if and only if $H$ is a complete graph with the same number of vertices as $G$. Because this condition can be checked in polynomial time, showing this is sufficient to prove (i).

First suppose that $H$ is a complete graph with the same number of vertices as $G$. 
Then the identity mapping is 
a surjective homomorphism from $G$ to $H$. 

Now suppose that $f$ is a surjective homomorphism from $G$ to $H$.
Because $G$ is a complete graph and $f$ is a homomorphism, there are no two distinct vertices $u$ and $v$ with 
$f(u)=f(v)$. Because $f$ is surjective, this means that $|V_G|=|V_H|$. 
Let $x$ and $y$ be two distinct vertices in $H$.
Because $f$ is surjective, there exist two vertices $u$ and $v$ in $G$ with $f(u)=x$ and $f(v)=y$.
Because $G$ is a complete graph, $u$ and $v$ are adjacent. Then, because $f$ is a homomorphism, $x$ and $y$ 
must be adjacent. Hence, $H$ is a complete graph. 
This completes the proof of (i).

\medskip
We now prove (ii).
Suppose that we are given a guest graph  $G$ with $k$ connected components $G_1,\ldots,G_k$ for some $k\geq 1$, and
a host path $P_\ell$ for some $\ell\geq 1$. 
If $\ell=1$, then there exists a surjective homomorphism from $G$ to $P_\ell$ if and only if
each $G_i$ consists of one vertex. Assume that $\ell\geq 2$.
We claim that there exists a surjective homomorphism from $G$ to $P_\ell$ if and only if
a) $G$ is bipartite and b) $\sum_{i=1}^k\diam(G_i)+k\geq \ell$.
Because conditions a) and b) can be checked in polynomial time, showing this is sufficient to prove (ii).

First suppose that $f$ is a surjective homomorphism from $G$ to $P_{\ell}$.
Because $P_{\ell}$ is a bipartite graph, $G$ is bipartite as well, and a) holds. 
For $i=1,\ldots,k$, we let  $P^i$ denote the subgraph of $P_{\ell}$ induced by $f(V_{G_i})$. 
Because each $G_i$ is connected and $f$ is a homomorphism, each $P^i$ is connected, and hence, forms a subpath of $P_\ell$.
Because $f$ is a homomorphism, $\diam(G_i)\geq |V_{P^i}|-1$ for $i=1,\ldots,k$. We use this inequality and the surjectivity of $f$ 
to obtain
$$\ell\leq \sum_{i=1}^k|V_{P^i}|
\leq \sum_{i=1}^k(\diam(G_i)+1)=\sum_{i=1}^k\diam(G_i)+k.$$

Now suppose that $G$ is bipartite and that  $\sum_{i=1}^k\diam(G_i)+k\geq \ell$.
Let  $F=G_i$ be an arbitrary connected component of $G$. We  first prove that for all
$\min\{1,\diam(F)\}+1\leq s\leq \diam(F)+1$, there is a surjective homomorphism $h$ from $F$ to $P_s$. Clearly, this holds if $\diam(F)=0$. 
Let $\diam(F)\geq 1$. Then $s\geq 2$. Let $P=v_1v_2 \cdots v_s$. 
Let $u$ be a vertex of $F$ such that $F$ has a vertex at distance $\diam(F)$ from $u$. 
We consider a mapping $h\colon V_{F}\rightarrow \{v_1,\ldots,v_k\}$ such that $h(x)=v_i$, where
$$i=
\begin{cases}
\dist_F(u,x)+1 & \text{if $\dist_F(u,x)\leq s-2$,}\\
s & \text{if $\dist_F(u,x)\geq s-1\; \mbox{and}\; (\dist_F(u,x)-s+1)~{\rm mod}~ 2=0,$}\\
s-1& \text{if $\dist_F(u,x)\geq s-1\; \mbox{and}\; (\dist_F(u,x)-s+1)~{\rm mod}~2=1.$}
\end{cases} 
$$
Because $G$ is bipartite, $F$ is bipartite. Then $h$ is a homomorphism, and because $\diam(G)\geq s-1$, $h$ is surjective.

Because $\sum_{i=1}^k\diam(G_i)+k\geq \ell\geq 2$,  we can cover $P_{\ell}$ by subpaths $P^1,\ldots,P^k$ (i.e., $\cup_{i=1}^kV_{P^i}=V_{P_{\ell}}$) in such a way that 
for all $1\leq i\leq k$, we have that $\min\{1,\diam(G_i)\}+1\leq |V_{P^i}|\leq \diam(G_i)+1$. 
It remains to recall that we have a surjective homomorphism from each $G_i$ to $P^i$, and claim b) follows.
This completes the proof of (ii), and hence, we have shown Proposition~\ref{prop:H-complete}.\qed
\end{proof}

\section{Hard Cases}\label{sec:NPc}
In contrast to case (ii) of Proposition~\ref{prop:H-complete}, where the host graphs are assumed to be paths, our problem becomes difficult when the guest graphs are restricted to paths.
Our next theorem shows this and the other hardness results of Table~\ref{tabl:compl}.

\begin{theorem}\label{t-all}
The {\sc Surjective $({\cal G},{\cal H})$-Homomorphism} problem is \NP-com\-plete in the following
six cases:
\begin{enumerate}
\item [(i)] ${\cal G}$ is the class of paths and ${\cal H}$ is the class of all graphs;
\item [(ii)]  ${\cal G}={\cal H}$ is the class of linear forests;
\item [(iii)] ${\cal G}={\cal H}$ is the class of disjoint unions of complete graphs;
\item [(iv)] ${\cal G}={\cal H}$ is the class of connected cographs;
\item [(v)] ${\cal G}={\cal H}$ is the class of trees of pathwidth at most two;
\item [(vi)] ${\cal G}={\cal H}$ is the class of split graphs;
\item [(vii)] ${\cal G}={\cal H}$ is the class of connected proper interval graphs.
\end{enumerate}
\end{theorem}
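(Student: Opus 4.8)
All seven problems lie in \NP, since a candidate mapping $f\colon V_G\to V_H$ can be checked for being an edge-preserving surjection in polynomial time. It therefore remains to prove \NP-hardness in each case, which I would do by separate reductions that nonetheless share a common combinatorial core.

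For (i) I would reduce from \textsc{Hamiltonian Path}. Given an instance $H$ with $n=|V_H|$ vertices, I take the guest graph to be $P_n$ and keep $H$ as the host. A homomorphism from $P_n=v_1\cdots v_n$ to $H$ is exactly a walk $f(v_1),\ldots,f(v_n)$ in $H$; if it is surjective it occupies all $n$ vertices of $H$ in exactly $n$ positions, so every vertex is hit once and the walk is a Hamiltonian path, and conversely a Hamiltonian path gives such a surjection. Hence $P_n$ admits a surjective homomorphism to $H$ if and only if $H$ has a Hamiltonian path.

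For the disconnected classes (ii) and (iii) I would reduce from the strongly \NP-complete problem \textsc{$3$-Partition}: given $3m$ integers $a_1,\ldots,a_{3m}\in(B/4,B/2)$ with $\sum_i a_i=mB$, decide whether they split into $m$ triples each of sum $B$. (Strong \NP-completeness keeps the constructed graphs of polynomial size.) For (iii) I would let the guest be the complete graphs $K_{a_1}+\cdots+K_{a_{3m}}$ and the host be $mK_B$. Each guest component is connected, so its image lies in one host clique; a homomorphism $K_{a_i}\to K_B$ is injective; and surjectivity forces $\bigcup_i f(V_{G_i})=V_H$. A counting argument, $mB=|V_H|\le\sum_i|f(V_{G_i})|\le\sum_i a_i=mB$, shows the images are pairwise disjoint and that each host clique receives components of total size exactly $B$; the window $(B/4,B/2)$ then forces exactly three per clique, i.e.\ a valid $3$-partition, and the converse is immediate. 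For (ii) I would run the same scheme with each $K_{a_i}$ replaced by $P_{a_i}$ and each $K_B$ by $P_B$; by Proposition~\ref{prop:H-complete}(ii) the image of $P_a$ is a subpath on at most $a$ vertices, so the identical counting argument forces each $P_{a_i}$ to map without folding onto a subpath on $a_i$ vertices and these subpaths to tile the host paths, again recovering $3$-partition.

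The connected and structured classes (iv)--(vii) are where the real work lies, and I would obtain them by enriching the union-based construction with a connecting gadget while preserving membership in the target class. For (iv) I would join both guest and host with a fixed clique (a disjoint union of cliques is a cograph, and a join of cographs is a connected cograph); for (vii) I would string the number-cliques along an overlapping clique-path so as to stay proper interval and connected; for (vi) I would place the ``bin'' structure inside one large clique and attach the item vertices as an independent set; and for (v) I would hang star- or broom-like number-gadgets off a backbone path, keeping the graph a tree of pathwidth at most two. In each case I expect the main obstacle to be exactly the connectivity: the added gadget must force the connecting vertices of the guest to map onto their intended counterparts in the host \emph{despite the freedom a homomorphism has to fold}, so that after discarding the gadget the residual map still encodes a $3$-partition. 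Establishing this rigidity, together with verifying that each constructed pair of graphs genuinely belongs to the prescribed class (the pathwidth-$\le 2$ and proper-interval conditions being the most delicate to check), is the crux of the theorem.
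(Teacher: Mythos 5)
Your reductions for (i)--(iii) are correct. Case (i) is exactly the paper's argument. For (ii) and (iii) you use the classical Garey--Johnson form of {\sc $3$-Partition} with the window $B/4<a_i<B/2$ to force three items per bin, whereas the paper pads each item to $p_i=a_i+B$ and uses bins of size $4B$ to achieve the same effect without assuming the window; both are valid, and your counting argument (images of connected components are single cliques/subpaths, $mB\le\sum_i|f(V_{G_i})|\le mB$ forces injectivity and disjointness) is essentially the paper's. One stylistic note: the paper's definition of {\sc $3$-Partition} in the text only assumes $a_i>0$, so if you use the windowed variant you should say explicitly that you are invoking its strong \NP-completeness.

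The genuine gap is in (iv)--(vii): you describe the right gadgets but you explicitly defer the rigidity arguments, and those arguments are the substance of these four cases, not a routine verification. Concretely: in (iv) one must argue that every homomorphism sends the added universal vertex of $G$ to the added universal vertex of $H$ (the paper does this, and it is short). In (v) the global vertex counts of $G$ and $H$ no longer match (there are $3m$ centre vertices in the guest but only $m$ in the host), so the clean ``$|V_G|=|V_H|$ forces injectivity'' argument breaks; the paper instead first pins the root ($f(w)=z$, because everything must land within distance two of $f(w)$), deduces that $v$-vertices go to $y$-vertices and $u$-vertices to $x$-vertices, and only then counts the $u$- versus $x$-vertices. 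In (vii) the connecting paths add $(n-1)(2m-2)$ extra guest vertices, which again destroys exact counting; the paper inflates the cliques by a factor $6m^2$ precisely so that these path vertices are asymptotically negligible and a deficit in any host clique would have to be at least $6m^2$, exceeding the total number of path vertices. In (vi) the forcing comes from the fact that a homomorphism maps the $4mB$-clique of $G$ injectively into $H$, which (by a size count) pins the clique onto the clique and the independent set onto the independent set. None of these steps is supplied in your proposal, and without them the reductions for (iv)--(vii) are not proofs. You correctly identified where the difficulty lies, but identifying it is not the same as resolving it.
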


\begin{proof}
We first prove (i). We reduce from the well-known problem {\sc Hamiltonian Path}, which is  \NP-complete~\cite{GareyJ79}. An $n$-vertex graph $H$ has a Hamiltonian path
if and only if there exists a surjective homomorphism from $P_n$ to $H$. This proves~(i).

For showing (ii)-(vii) we need some extra terminology.
We say that a multiset $A=\{a_1,\ldots,a_n\}$ of integers is {\it $(m,B)$-positive} if $n=3m$, $\sum_{i=1}^na_i=mB$ and $a_i>0$ for $i=1,\ldots,n$.
A \emph{$3$-partition} of a multiset $A=\{a_1,\ldots,a_n\}$ that is $(m,B)$-positive for some integers $m,B$ is a partition $S_1,S_2,\ldots, S_m$ of $A$ such  that for $1\leq j\leq m$,  
$|S_j|=3$ and $\sum_{a_i\in S_j}a_i=B$.
This leads to the  problem:

\medskip
\noindent
 {\sc $3$-Partition}\\
 {\it Instance:} an $(m,B)$-positive multiset $A=\{a_1,\ldots,a_{n}\}$  for some integers $m,B$;\\
 {\it Question:} does $A$ have a 3-partition?
 
 \medskip
 \noindent
The {\sc $3$-Partition} problem is known to be \NP-complete~\cite{GareyJ79} in the strong sense, i.e., it remains hard even if all integers in the input are  encoded in unary.  This enables us to reduce from this problem  in order to show \NP-completeness in the cases (ii)-(vii). In each of these six cases we assume that $A=\{a_1,\ldots,a_{n}\}$ is a  $(m,B)$-positive multiset  for some integers $m,B$.  We now
prove (ii)-(vii).
 
\medskip
\noindent
{\bf (ii)}  For $i=1,\ldots,n$, let $p_i=a_i+B$, and let $q=4B$.
Let $G$ be the linear forest $G_1+\cdots + G_n$, where $G_i$ is isomorphic to $P_{p_i}$
for $i=1,\ldots,n$.
Let $H$ be the linear forest $H_1+\cdots + H_m=mP_q$.
The forests $G$ and $H$ are displayed in Figure~\ref{f-linforest}. 
We claim that $A$ has a $3$-partition if and only if there exists a surjective homomorphism from 
$G$ to $H$. 

\begin{figure} 
\begin{centering}
\includegraphics[scale=1]{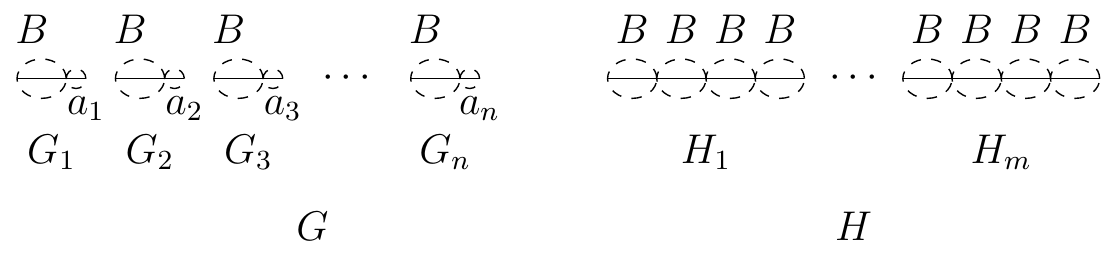}
\caption{The linear forests $G$ and $H$ constructed in the proof of (ii).}
\label{f-linforest}
\end{centering}
\end{figure}

Suppose that $S_1,\ldots,S_m$ is a $3$-partition of $A$. For each $1\leq j \leq m$, we consider the connected components $G_{i_1},G_{i_2},G_{i_3}$ of $G$ such that $S_j=\{a_{i_1},a_{i_2},a_{i_3}\}$. We map the vertices of $G_{i_1}$ to the first $p_{i_1}$ vertices of $H_j$ according to the path order, and similarly the vertices of $G_{i_2}$ to the next $p_{i_2}$ vertices of $H_j$, and the vertices of $G_{i_3}$ to the last $p_{i_3}$ vertices of $H_j$. Because $p_{i_1}+p_{i_2}+p_{i_3}=a_{i_1}+a_{i_2}+a_{i_3}+3B=4B=q$ for $i=1,\ldots,n$, we obtain a surjective homomorphism from $G$ to $H$ in this way.
 
Now suppose that $f$ is a surjective homomorphism from $G$ to $H$. 
We observe that $|V_G|=|V_H|=4mB$. Hence, $f$ is also injective. Because $f$ is a homomorphism, 
$f$ must map all vertices of each connected component of $G$ to the same connected component of $H$. 
Let $1\leq j\leq m$, and let
$G_{i_1},\ldots,G_{i_s}$ be the connected components of $G$ that are  mapped to $H_j$.  
Because $|V_{H_j}|=4B$ and every connected component of $G$ contains at least $B+1$ vertices, we  find that $s\leq 3$.
Because $G$ has $3m$ connected components, we then find that $s=3$. Because $f$ is injective, $a_{i_1}+a_{i_2}+a_{i_3}+3B=p_{i_1}+p_{i_2}+p_{i_3}=q=4B$. 
Hence, $a_{i_1}+a_{i_2}+a_{i_3}=B$ and we let $S_j=\{a_{i_1},a_{i_2},a_{i_3}\}$. 
This means that we obtain the partition $S_1,\ldots,S_m$ of $A$ that is a $3$-partition. 
This completes the proof of (ii).

\medskip
\noindent
{\bf (iii)} We use all arguments from the proof of (ii) after replacing each path in $G$ and $H$ by a clique of the same size.

\medskip
\noindent
{\bf (iv)} In the graphs $G$ and $H$ from the proof of (ii) we replace each path by a clique of the same size. We also add a vertex $v$ in $G$ adjacent  to all other vertices of $G$, and a vertex $x$ in $H$ adjacent to all other vertices of $H$.
The resulting graphs are connected cographs.  We observe that every homomorphism maps $v$ to~$x$.
To finish the proof we use the same arguments as the ones used to prove~(ii).

\begin{figure}
\begin{centering}
\includegraphics[scale=0.95]{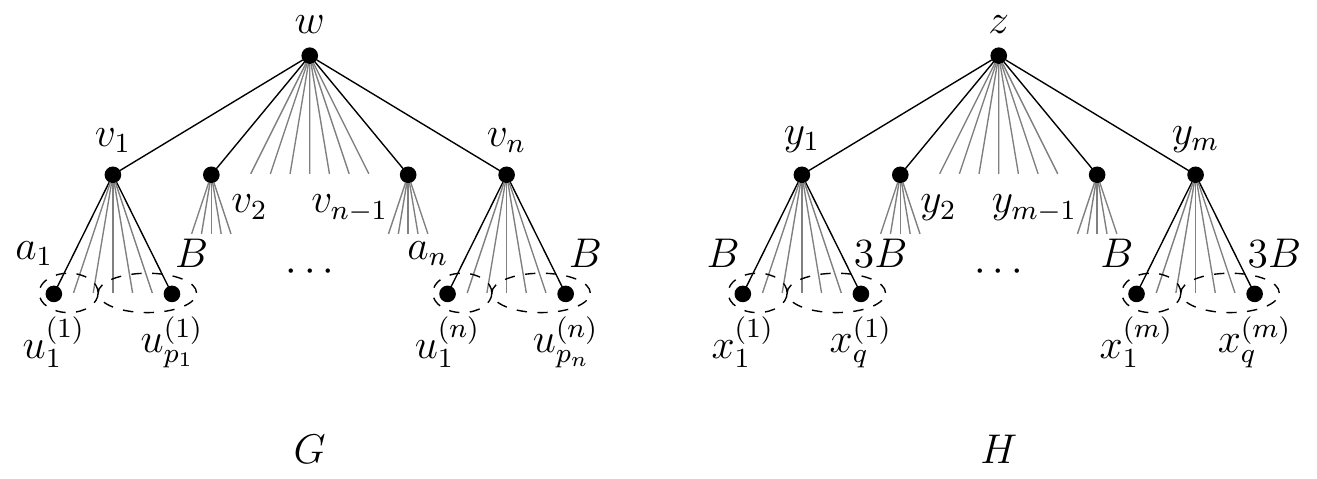}
\caption{The trees $G$ and $H$ constructed in the proof of (v).}
\label{f-ghdisplay}
\end{centering}
\end{figure}
\noindent
{\bf (v)} For $i=1,\ldots,n$, let $p_i=a_i+B$, and let $q=4B$.
We construct two trees $G$ and $H$. We first construct $G$:
\begin{itemize} 
\item[$\bullet$] for  $i=1,\ldots,n$, introduce $p_i$ vertices $u_1^{(i)},\ldots,u_{p_i}^{(i)}$ and a vertex $v_i$ adjacent to $u_1^{(i)},\ldots,u_{p_i}^{(i)}$;
\item[$\bullet$] add a new vertex $w$ and make it adjacent to $v_1,\ldots,v_n$.
\end{itemize}
We now construct $H$:
\begin{itemize} 
\item[$\bullet$] for $j=1,\ldots,m$, introduce $q$ vertices $x_1^{(j)},\ldots,x_{q}^{(j)}$ and a vertex $y_j$ adjacent to $x_1^{(j)},\ldots,x_{q}^{(j)}$;
\item[$\bullet$] add a new vertex $z$ and make it  adjacent to $y_1,\ldots,y_m$.
\end{itemize}
The trees $G$ and $H$ are displayed in Figure~\ref{f-ghdisplay}.
For $G$ we take the path decomposition with bags $\{u_h^{(i)},v_i,w\}$ to find that $\pw(G)\leq 2$.
Similarly, we find that $\pw(H)\leq 2$.
We claim that $A$ has a $3$-partition if and only if there is a surjective homomorphism from $G$ to $H$.

First suppose that $S_1,\ldots,S_m$ is a $3$-partition of $A$. We define $f$ as follows. 
We set $f(w)=z$. Then
for $j=1,\ldots,m$, we consider the set $S_j=\{a_{i_1},a_{i_2},a_{i_3}\}$. 
We let $f$ map the vertices $v_{i_1},v_{i_2},v_{i_3}$ to $y_j$.
Then we let $f$ map the vertices $u_1^{(i_1)},\ldots,u_{p_{i_1}}^{(i_1)}$ 
 consecutively to the first $p_{i_1}$ vertices of the set
$\{x_1^{(j)},\ldots,x_{q}^{(j)}\}$, the vertices 
$u_1^{(i_2)},\ldots,u_{p_{i_2}}^{(i_2)}$ 
to the next $p_{i_2}$ vertices of 
this set, and finally, the vertices $u_1^{(i_3)},\ldots,u_{p_{i_3}}^{(i_3)}$ to 
 the last $p_{i_3}$ vertices of the set. Because $p_{i_1}+p_{i_2}+p_{i_3}=a_{i_1}+a_{i_2}+a_{i_3}+3B=4B=q$, we find that $f$
 is a surjective homomorphism from $G$ to $H$.
  
Now suppose that $f$ is a surjective homomorphism from $G$ to $H$. We observe that $f(w)=z$, because all vertices of $G$ must be mapped at distance at most two from $f(w)$. 
Consequently, $f$ maps every $v$-vertex to a $y$-vertex, and every $u$-vertex to an $x$-vertex.
The number of $u$-vertices is $p_1+\ldots+p_n=a_1+\ldots+a_n+nB=4mB$, which is equal to the number
of $x$-vertices. 
Hence $f$ maps the $u$-vertices  
 bijectively to the $x$-vertices.
 Moreover, if $f(v_i)=y_j$, then $f$ maps the vertices  $u_1^{(i)},\ldots,u_{p_i}^{(i)}$ to the vertices from the set $\{x_1^{(j)},\ldots,x_{q}^{(j)}\}$.
For $j=1,\ldots,m$, 
let $v_{i_1},\ldots,v_{i_s}$ be the vertices mapped to $y_j$.  
Because $p_i>B$ for all $1\leq i\leq n$, we find that $s\leq 3$. Then, because $n=3m$, we conclude that $s=3$. Because $f$ maps 
bijectively 
$\{u_1^{(i_1)},\ldots,u_{p_{i_1}}^{(i_1)}\}\cup \{u_1^{(i_2)},\ldots,u_{p_{i_2}}^{(i_2)}\}
\cup \{u_1^{(i_3)},\ldots,u_{p_{i_3}}^{(i_3)}\}$ to
$\{x_1^{(j)},\ldots,x_{q}^{(j)}\}$, we find that $a_{i_1}+a_{i_2}+a_{i_3}+3B=p_{i_1}+p_{i_2}+p_{i_3}=q=4B$, and consequently, $a_{i_1}+a_{i_2}+a_{i_3}=B$. We set $S_j=\{a_{i_1},a_{i_2},a_{i_3}\}$. It remains to observe that $S_1,\ldots,S_m$ is a $3$-partition of $A$.    
This completes the proof of (v).

\begin{figure}
\begin{centering}
\includegraphics[scale=0.85]{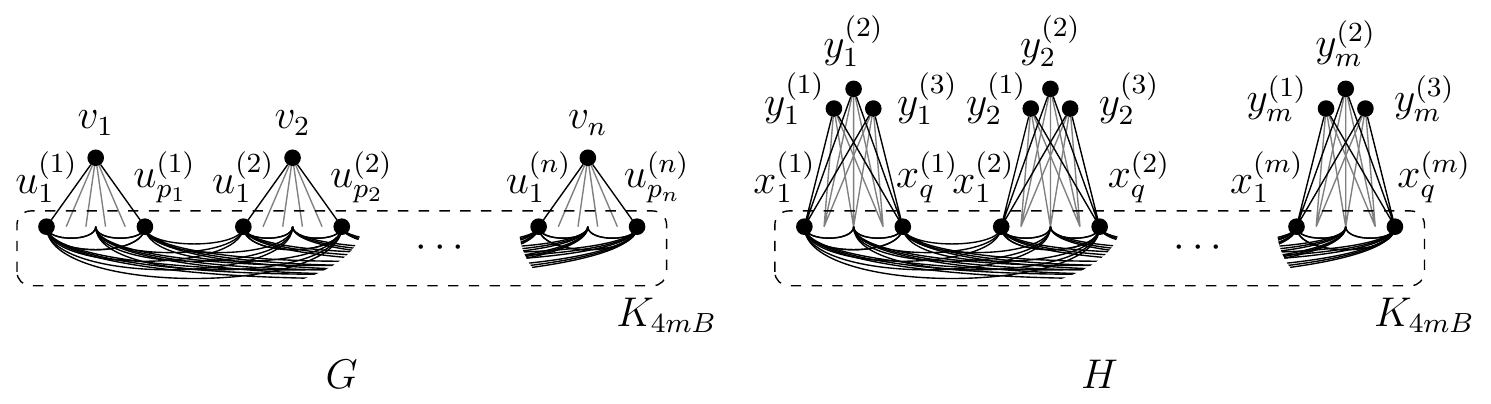}
\caption{The split graphs $G$ and $H$ constructed in the proof of (vi).}
\label{f-split}
\end{centering}
\end{figure}

\medskip
\noindent
{\bf (vi)} For $i=1,\ldots,n$, let $p_i=a_i+B$, and let $q=4B$.
We construct two graphs $G$ and $H$. We first construct $G$:
\begin{itemize} 
\item[$\bullet$] for  $i=1,\ldots,n$, introduce $p_i$ vertices $u_1^{(i)},\ldots,u_{p_i}^{(i)}$ and a vertex $v_i$ adjacent to $u_1^{(i)},\ldots,u_{p_i}^{(i)}$;
\item[$\bullet$] joint all $u$-vertices by edges pairwise to obtain a clique of size $4mB$.
\end{itemize}
We construct $H$ as follows:
\begin{itemize} 
\item[$\bullet$] for $j=1,\ldots,m$, introduce $q$ vertices $x_1^{(j)},\ldots,x_{q}^{(j)}$ and vertices $y_j^{(1)},y_j^{(2)},y_j^{(3)}$ adjacent to $x_1^{(j)},\ldots,x_{q}^{(j)}$;
\item[$\bullet$] joint all $x$-vertices by edges pairwise to obtain a clique of size $4mB$.
\end{itemize}
We observe that $G$ and $H$ are split graphs, also see Figure~\ref{f-split}.
We claim that $A$ has a $3$-partition if and only if there is a surjective homomorphism from $G$ to $H$.

First suppose that $S_1,\ldots,S_m$ is a $3$-partition of $A$. We define $f$ as follows. 
For $j=1,\ldots,m$, we consider the set $S_j=\{a_{i_1},a_{i_2},a_{i_3}\}$. 
We let $f$ map the vertices $v_{i_1},v_{i_2},v_{i_3}$ to $y_j^{(1)},y_j^{(2)},y_j^{(3)}$ respectively.
Then we let $f$ map the vertices $u_1^{(i_1)},\ldots,u_{p_{i_1}}^{(i_1)}$  to the first $p_{i_1}$ vertices of the set
$\{x_1^{(j)},\ldots,x_{q}^{(j)}\}$, the vertices 
$u_1^{(i_2)},\ldots,u_{p_{i_2}}^{(i_2)}$ 
to the next $p_{i_2}$ vertices of 
this set, and finally, the vertices $u_1^{(i_3)},\ldots,u_{p_{i_3}}^{(i_3)}$ to 
 the last $p_{i_3}$ vertices of the set. Because $p_{i_1}+p_{i_2}+p_{i_3}=a_{i_1}+a_{i_2}+a_{i_3}+3B=4B=q$, we find that $f$
 is a surjective homomorphism from $G$ to $H$.
  
Now suppose that $f$ is a surjective homomorphism from $G$ to $H$. Observe that $|V_G|=|V_H|$. Hence, $f$ is a bijection.
The homomorphism $f$ maps any clique of $G$ to a clique of the same size in $H$. It follows that all $u$-vertices of $G$ are mapped to $x$-vertices of $H$, and all $v$-vertices of $G$ are mapped to $y$-vertices of $H$. For $j=1,\ldots,m$, 
let $v_{i_1},v_{i_2},v_{i_3}$ be the vertices mapped to $y_j^{(1)},y_j^{(2)},y_j^{(3)}$ respectively. Then the vertices 
$u_1^{(i_1)},\ldots,u_{p_{i_1}}^{(i_1)}$, $u_1^{(i_2)},\ldots,u_{p_{i_2}}^{(i_2)}$ and $u_1^{(i_3)},\ldots,u_{p_{i_3}}^{(i_3)}$
are mapped bijectively to the vertices $x_1^{(j)},\ldots,x_{q}^{(j)}$. Therefore, $a_1+a_2+a_3+3B=p_1+p_2+p_3=q=4B$ and $a_1+a_2+a_3=B$.  
We set $S_j=\{a_{i_1},a_{i_2},a_{i_3}\}$, and it remains to observe that $S_1,\ldots,S_m$ is a $3$-partition of $A$.    
This completes the proof of (vi).

\begin{figure}
\begin{centering}
\includegraphics[scale=0.85]{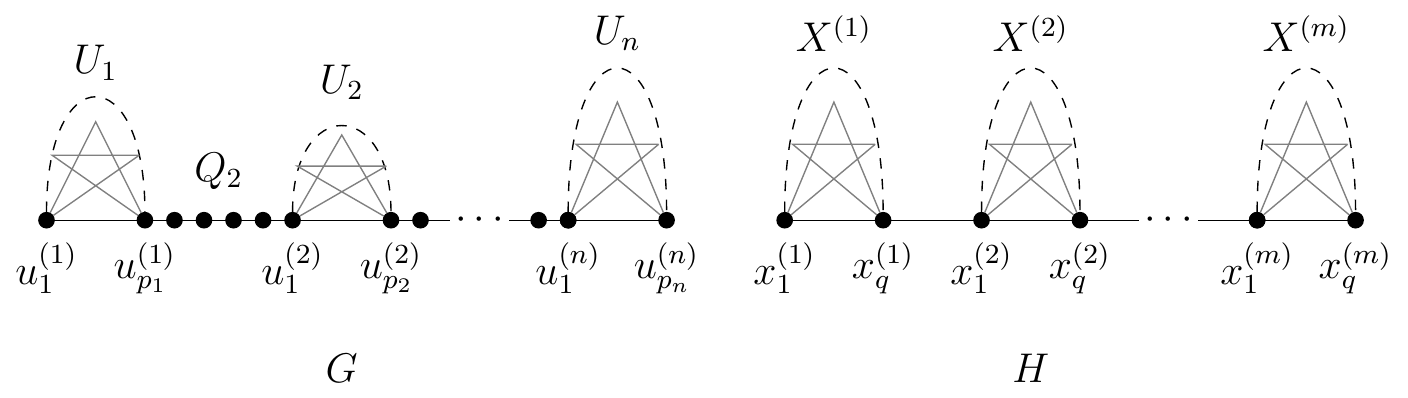}
\caption{The proper interval graphs $G$ and $H$ constructed in the proof of (vii).}
\label{f-pi}
\end{centering}
\end{figure}

\medskip
\noindent
{\bf (vii)} For $i=1,\ldots,n$, let $p_i=6m^2(a_i+B)$, and let $q=24m^2B$. 
We construct two graphs $G$ and $H$. We first construct $G$: 
\begin{itemize} 
\item[$\bullet$] for $i=1,\ldots,n$, construct a clique $U_i$ on $p_i$ vertices $u_1^{(i)},\ldots,u_{p_i}^{(i)}$; 
\item[$\bullet$] for $i=2,\ldots,n$, join $u_{p_{i-1}}^{(i-1)}$ and $u_{1}^{(i)}$
by a path $Q_i$ of length $2m-1$.
\end{itemize}
Then we construct $H$:
\begin{itemize} 
\item[$\bullet$] for $j=1,\ldots,m$, construct a clique $X^{(j)}$ on $q$ vertices $x_1^{(j)},\ldots,x_{q}^{(j)}$; 
\item[$\bullet$] for $j=2,\ldots,m$, join $x_{q}^{(j-1)}$ and $x_{1}^{(j)}$ by an edge.
\end{itemize}
We observe that $G$ and $H$ are proper interval graphs, also see Figure~\ref{f-pi}.
We claim that $A$ has a $3$-partition if and only if there exists a surjective homomorphism from $G$ to $H$.

First suppose that $S_1,\ldots,S_m$ is a $3$-partition of $A$.
We partition each $X^{(j)}$ into three cliques $X_{i_1}\cup X_{i_2}\cup X_{i_3}$ of size 
$p_{i_1}$, $p_{i_2}$, and $p_{i_3}$, respectively, corresponding to 
$S_j=\{a_{i_1},a_{i_2},a_{i_3}\}$; this is possible because $|X^{(j)}|=q=24m^2B=
6m^2(a_{i_1}+a_{i_2}+a_{i_3}+3B)=p_{i_1}+p_{i_2}+p_{i_3}$;
We will determine a homomorphism $f$ from $G$ to $H$ such that
$f$ is a bijection from $U_i$ to $X_i$ for $i=1,\ldots,n$. 
Hence, this property will ensure that $f$ is surjective. 
In order to do this, we must show that we do not violate the definition of a homomorphism with respect to the remaining vertices of $G$; note these remaining vertices are the  inner vertices of the $Q$-paths. We therefore define $f$ inductively as follows. 

 Let $i=1$. Assume that $a_1\in S_j$. We let $f$ map the vertices of $U_1$ to the vertices of $X_1$ bijectively in an arbitrary order. 
 
Let $i\geq 2$ and suppose that $f$ is constructed for all vertices of $U_s$ and $Q_s$ 
for all $1\leq s\leq i-1$. 
Let $y=f(u_{p_{i-1}}^{(i-1)})$.
Because $H$ has diameter at most $2m-1$, we find
that $y$ is at distance at most $2m-1$ from the set $X_i$.
Consider the subgraph $H'$ of $H$ that contains  $X_i$ and a shortest path between $y$ and $X_i$. Because $|X_i|\geq 2m$, we find that $H'$ contains a $(y,z)$-path of length $2m-1$ for some vertex $z\in X_i$. Recall that $y=f(u_{p_{i-1}}^{(i-1)})$. 
We map consecutively the vertices of the $(u_{p_{i-1}}^{(i-1)},u_{1}^{(i)})$-path $Q_i$ of length $2m-1$ to the vertices of $P$ in the path order. Note that $f(u_{1}^{(i)})=z$. Then we map the vertices $u_2^{(i)},\ldots,u_{p_{i}}^{(i)}$ to the vertices of $X_i\setminus\{z\}$ 
bijectively and in an arbitrary order. 
In this way we ensure that $f$ is a surjective homomorphism from $G$ to $H$. 

Now suppose that $f\colon V_G\rightarrow V_H$ is a surjective homomorphism. Because $f$ is a homomorphism, $f$ maps injectively every clique of $G$ to a clique in $H$. 
Because $p_i\geq 3$ for all $1\leq i\leq n$, we then find that $f$ cannot map a clique $U_i$ to an edge $x_q^{(j-1)}x_1^{(j)}$. Hence, $f$ maps $U_i$
injectively to some clique $X^{(j)}$ of $H$. 

Let $1\leq j\leq m$, and
let $\{i_1,\ldots,i_s\}$ be the set of all indices that correspond to the $U$-cliques that $f$ maps to $X_j$. Suppose that $p_{i_1}+\ldots+p_{i_s}<q$. Then, 
$6m^2(a_{i_1}+\ldots+a_{i_s}+sB)=p_{i_1}+\ldots+p_{i_s}<q=24m^2B$.
This means that $a_{i_1}+\ldots+a_{i_s}+sB\leq 3$. 
Consequently,
$q-(p_{i_1}+\ldots+p_{i_s})\geq 6m^2$. 
Hence, $f$ maps at least $6m^2$ inner vertices of the paths $Q_i$ to $X^{(j)}$. 
However, the total number of these vertices is $(n-1)(2m-2)=(3m-1)(2m-2)<6m^2$, a contradiction. This means that $p_{i_1}+\ldots+p_{i_s}\geq q$. Because the same claim holds for all $1\leq j\leq m$, and 
$p_1+\cdots +p_n=6m^2(a_1+\ldots + a_n+nB)=6m^2(mB+3mB)=24m^3B=mq$,
we conclude that $p_{i_1}+\ldots+p_{i_s}=q$. Because 
$6m^2(a_{i_1}+\ldots+a_{i_s}+sB)=p_{i_1}+\ldots+p_{i_s}=q=24m^2B$ and 
$a_{i_1}+\ldots+a_{i_s}>0$, we find that $s\leq 3$.
Then, because the same claim holds for all $1\leq j \leq m$, and 
$p_1+\cdots +p_n=6m^2(a_1+\ldots + a_n+nB)=mq=24m^3B$,
we find that $s=3$  and $a_{i_1}+a_{i_2}+a_{i_3}=B$.
We set $S_j=\{a_{i_1},a_{i_2},a_{i_3}\}$. It remains to observe that $S_1,\ldots,S_m$ is a $3$-partition of $A$.  This completes the proof of~(vii).
\qed
\end{proof}

\section{Tractable Cases}\label{sec:FPT}

By Theorem~\ref{t-all} (v), {\sc Surjective Homomorphism} is \NP-complete when $G$ and $H$ are restricted to be trees. Here, we prove that the problem is \FPT\  for trees when parameterized by the number of leaves in $H$. 
We first need some additional terminology. Let $T$ be a tree. Then we may fix some vertex of $T$ and call it the {\it root} of $T$.  
We observe that the root defines a parent-child relation between adjacent vertices. 
This enables us to define for a vertex $u$ of $T$ the tree $G_u$, which is the subtree of $T$ that is induced by
$u$ and all its descendants in $T$; we fix $u$ to be the root of $G_u$.
For a child $v$ of $u$, we let $G_{uv}$ denote the subtree of $G$ induced by $u$ and the set of all descendants of $v$ in $T$; we fix $u$ to be the root of $G_{uv}$.

\begin{theorem}\label{thm:trees-FPT}
Testing if there is a surjective homomorphism from an $n$-vertex tree $G$ to an $m$-vertex tree $H$ with $k$ leaves can be done in  $O(2^{2k}nm^2)$ time.
\end{theorem}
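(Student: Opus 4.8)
The plan is to reduce surjectivity to a \emph{leaf-covering} condition and then run a bottom-up dynamic program over $G$. The crucial observation is that a homomorphism $f$ from a tree $G$ to a tree $H$ is surjective if and only if the image $f(V_G)$ contains every leaf of $H$. The forward direction is immediate. For the reverse, note that $f(V_G)$ always induces a \emph{connected} subgraph of $H$, because $G$ is connected. Moreover, in a tree every edge lies on a path between two leaves: deleting any edge of $H$ splits it into two subtrees, each of which contains at least one leaf of $H$. Hence the minimal subtree of $H$ spanning all leaves is $H$ itself, so any connected subgraph of $H$ containing all leaves must equal $H$. Therefore it suffices to test whether $G$ admits a homomorphism to $H$ whose image contains the (at most) $k$ leaves $L=\{\ell_1,\dots,\ell_k\}$ of $H$.

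First I would root $G$ at an arbitrary vertex $\rho$, which orients $G$ into the subtrees $G_u$ defined above. For every vertex $u$ of $G$ and every $a\in V_H$, I would compute a Boolean array $D[u][a]\colon 2^{L}\to\{0,1\}$, where $D[u][a](S)=1$ means that there is a homomorphism from $G_u$ to $H$ mapping $u$ to $a$ whose image meets $L$ in exactly $S$. There are at most $2^{k}$ subsets $S$, so each array has size $O(2^{k})$. For a leaf $u$ of $G$ the array $D[u][a]$ records only the set $\{a\}\cap L$. For an internal vertex $u$ with children $c_1,\dots,c_t$, I process the children one at a time. Given $f(u)=a$, each child $c_i$ may be sent to any neighbour $b$ of $a$, contributing one of the coverage sets recorded in $D[c_i][b]$; the coverage contributed by all of $G_u$ is the union of $\{a\}\cap L$ with the coverage sets chosen for the children. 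Thus the array at $u$ is obtained by starting from $\{a\}\cap L$ and repeatedly forming, for each child, all unions $S_1\cup S_2$ of an already-accumulated set $S_1$ with a set $S_2$ achievable for that child. A surjective homomorphism then exists if and only if $D[\rho][a](L)=1$ for some $a\in V_H$.

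The running time is dominated by the child-combination step. Computing, for a fixed child and all choices of $a$, the sets achievable when the child is mapped to some neighbour of $a$ costs $O(m\,2^{k})$ over all of $H$, and combining two coverage arrays by forming all pairwise unions $S_1\cup S_2$ costs $O(2^{2k})$. Summing these over the $O(n)$ parent-child pairs of $G$ and the $O(m)$ possible images $a$ (with room to spare) keeps the total within the claimed bound $O(2^{2k}nm^2)$. I expect the main obstacle to be conceptual rather than computational: the whole argument hinges on the leaf-covering characterisation of surjectivity, which is exactly what lets us replace the intractable task of tracking which of the $m$ vertices of $H$ are hit by the parameter-bounded task of tracking which of the $k$ leaves are hit. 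Pinning down this characterisation carefully, in particular the reverse implication via connectivity of the image together with the fact that the leaves of a tree span the whole tree, is the delicate step; the remainder is a routine tree dynamic program.
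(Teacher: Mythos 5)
Your proposal is correct and follows essentially the same route as the paper: the same leaf-covering characterisation of surjectivity (which the paper states as its ``key observation'' and you additionally justify via connectivity of the image), the same bottom-up dynamic program over rooted subtrees with states indexed by the image vertex in $V_H$ and a subset of the $k$ leaves, and the same union-based child combination giving the $O(2^{2k}nm^2)$ bound. The only cosmetic difference is that you track the \emph{exact} intersection of the image with $L$ while the paper records all subsets contained in the image, which changes nothing.
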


\begin{proof}
We use dynamic programming.
If $H$ has one vertex the claim is trivial.
Assume that $H$ has at least one edge.  Let $L$ be the set of the leaves of $H$.
First, we fix a root $r$ of $G$.
For each vertex $u\in V_G$, we construct a table that contains a number of  
records $R=(x,S)$ where $x\in V_H$ and $S\subseteq L$. 
A pair $(x,S)$ is a record for $u$ if and only if there exists a homomorphism $h$ from
$G_u$ to $H$ such that $h(u)=x$ and $S\subseteq h(V_{G_u})$. 
We also construct a similar table for each edge $uv\in E_G$. 
Then a pair $(x,S)$ is a record for $uv$ if and only if there exists a homomorphism $h$ from
$G_{uv}$ to $H$ such that $h(u)=x$ and $S\subseteq h(V_{G_{uv}})$. 
The key observation is that a homomorphism $f$ from $G$ to $H$ is surjective if and only
if $L\subseteq f(V_G)$, i.e., if and only if  the table for $r$ contains at least one record $(z,L)$.

We construct the tables as follows. We start with the leaves in $G$ not equal to $r$
(should $r$ be a leaf). Their tables are constructed straightforwardly.
Suppose that we have not constructed the table for a vertex $u$, while we have constructed the tables
for all children $v_1,\ldots,v_p$ of $u$. Then we first determine the table for each edge $uv_i$ by letting it consist of all records $(x,S)$ such that
\begin{itemize}     
\item[$\bullet$] $(y,S)$ with $y\in N_H(x)$ is in the table for $v_i$;
\item[$\bullet$] $x\in L$ and $(y,S\setminus\{x\})$ with $y\in N_H(x)$ is in the table for $v_i$. 
\end{itemize}
To construct the table for $u$, we consecutively construct auxiliary tables for $i=1,\ldots,p$. 
The table for $i=1$ is the table for $uv_1$.
The table for $i\geq 2$ 
consists of the records
$(x,S)$ such that $S=S'\cup S''$, $(x,S')$ is in the table for $i-1$  
and $(x,S'')$ is in the table for $uv_i$. The table for $u$ is the table constructed for $i=p$.

The correctness of the algorithm follows from its description. We observe that each table contains at most $m2^k$ records and can be constructed in  $O(2^{2k}\cdot m^2)$ time.
Because we construct $O(n)$ tables (including the auxiliary ones), our algorithms runs in $O(2^{2k}\cdot nm^2)$ time. This completes the proof of Theorem~\ref{thm:trees-FPT}.
\qed  
\end{proof}

We now prove that {\sc Surjective Homomorphism} is \FPT\ when parameterized by the vertex cover number of $G$ and $H$. 
The following approach has been successful before~\cite{FellowsLMRS08,FialaGK11}.
The idea is to reduce a problem to an integer linear programming
problem that is FPT when parameterized by the number of variables. 
Therefore, 
we consider the {\sc $p$-Variable Integer Linear Programming Feasibility} problem that has as input  
a $q\times p$ matrix $A$ with integer elements and an integer vector $b\in \mathbb{Z}^q$ and that is to decide whether there exists a vector $x\in\mathbb{Z}^p$ such that $A\cdot x\leq b$.
Lenstra~\cite{Lenstra83} showed that this problem is FPT when parameterized by $p$. 
The best running time is due to Frank and Tardos~\cite{FrankT87}.

\begin{lemma}[\cite{FrankT87}]\label{l:lin}
The {\sc $p$-Variable Integer Linear Programming Feasibility}
problem can be solved using $O(p^{2.5p+o(p)}\cdot L)$ arithmetic operations and space polynomial in $L$, where $L$ is the number of bits of the input.
\end{lemma}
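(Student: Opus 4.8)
The plan is to recognize this as an effective version of Lenstra's theorem on integer programming in fixed dimension, and to sketch the lattice-geometric algorithm that underlies it. The starting point is to view feasibility as the question of whether the rational polyhedron $P=\{x\in\mathbb{R}^p : A\cdot x\le b\}$ contains a point of the integer lattice $\mathbb{Z}^p$. First I would reduce to the case where $P$ is bounded (by intersecting with a box whose size is controlled by $L$ through standard bounds on the encoding length of the vertices of a rational polyhedron) and full-dimensional (lower-dimensional faces are handled by recursion after eliminating the implicit equalities).

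The heart of the method is the flatness theorem from the geometry of numbers: there is a constant $c(p)$, depending only on $p$, such that every convex body in $\mathbb{R}^p$ whose lattice width exceeds $c(p)$ must contain a lattice point. The algorithm therefore proceeds by a dichotomy. I would first approximate $P$ by an ellipsoid (via the shallow-cut ellipsoid method, or via John's theorem) and, applying lattice basis reduction to the lattice rescaled by that ellipsoid, determine a nonzero integer direction $d$ in which $P$ is thinnest. If the width $\max_{x\in P} d^\top x - \min_{x\in P} d^\top x$ is at least $c(p)$, then $P$ is guaranteed to contain a lattice point, and one can be produced by rounding an interior point. Otherwise the width is smaller than $c(p)$, so every lattice point of $P$ lies on one of at most $\lceil c(p)\rceil+1$ parallel hyperplanes $d^\top x = t$; on each such hyperplane the problem becomes a $(p-1)$-dimensional integer feasibility instance, on which I would recurse.

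This recursion branches into at most $O(c(p))$ subproblems and drops the dimension by one at each step, so the total number of leaves is bounded by a function of $p$ alone, which already yields fixed-parameter tractability in $p$. To obtain the stated bound of $O(p^{2.5p+o(p)}\cdot L)$ arithmetic operations — and in particular to make the dependence on the bit-size $L$ only linear — I would invoke the preprocessing step of Frank and Tardos: using simultaneous Diophantine approximation, replace the large integer entries of $A$ and $b$ by entries bounded by a function of $p$, in such a way that the sign of every relevant scalar product, and hence the feasibility question itself, is preserved. After this replacement the geometry-of-numbers subroutines operate on numbers whose size depends only on $p$, and the sharpened quantitative flatness bounds supply the exponent $2.5p+o(p)$.

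The main obstacle, and the part that genuinely requires the geometry of numbers, is making the flatness dichotomy both correct and algorithmic: one must quantitatively bound the flatness constant $c(p)$ (which ultimately rests on Minkowski's convex body theorem applied to an ellipsoidal approximation of $P$) and one must actually find the thin lattice direction efficiently, which is exactly where lattice basis reduction enters. Controlling the interplay between the approximation quality of the ellipsoid, the guarantee of the reduction step, and the resulting exponent in $p$ is the delicate quantitative core; by contrast the recursion itself and the reduction to bounded, full-dimensional polyhedra are comparatively routine.
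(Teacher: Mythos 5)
The paper does not prove this lemma at all: it is imported as a black box, with a citation to Frank and Tardos, and is used only as a subroutine in the proof of Theorem~\ref{thm:vcn-FPT}. So there is no in-paper proof to compare yours against; the relevant question is whether your sketch is a faithful account of the cited result, and it essentially is. Lenstra's flatness-based dichotomy and the dimension-reducing recursion over at most $O(c(p))$ parallel hyperplanes give fixed-parameter tractability in $p$; the specific exponent $2.5p+o(p)$ is really Kannan's sharpening of Lenstra's algorithm (worth flagging: the stated operation count is due to Lenstra plus Kannan plus Frank--Tardos rather than to Frank--Tardos alone, even though the paper cites only the latter); and the Frank--Tardos simultaneous-Diophantine-approximation preprocessing is exactly what makes the dependence on the bit-length $L$ linear and the space polynomial, as you say. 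The places where your account is thinner than a real proof are the standard ones: making the ellipsoidal rounding of $P$ algorithmic (shallow-cut ellipsoid method), and turning ``the lattice width exceeds $c(p)$, so a lattice point exists'' into a procedure that actually outputs one --- these occupy a substantial part of the original papers. But since the present paper only needs the statement as an imported tool, your reconstruction is an accurate summary of where the bound comes from rather than a gap.
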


\begin{figure}
\begin{centering}
\includegraphics[scale=0.9]{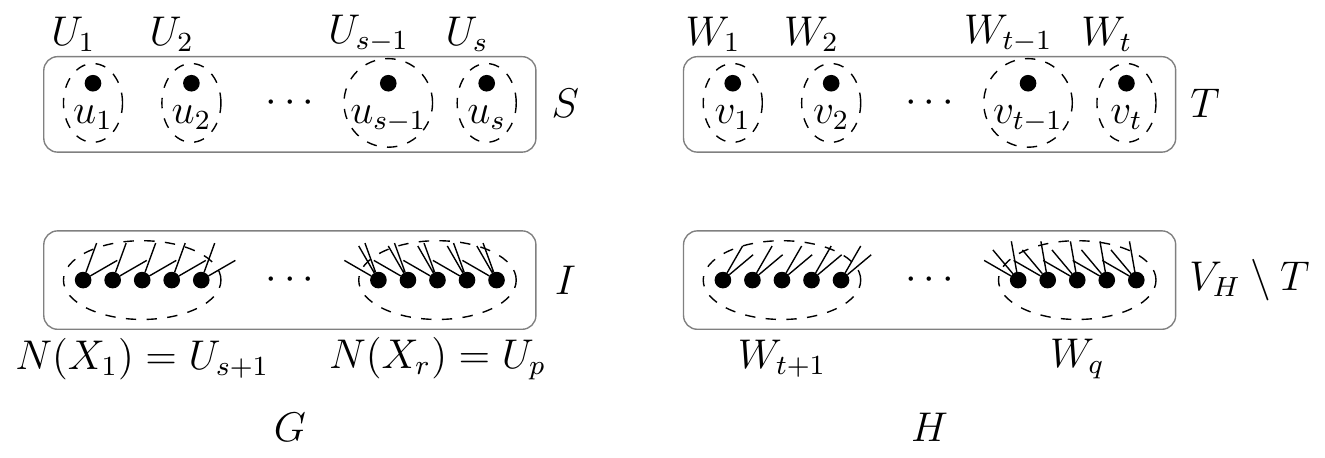}
\caption{The graphs $G$ and $H$ as considered in the proof of Theorem~\ref{thm:vcn-FPT}.}
\label{f-vc}
\end{centering}
\end{figure}

\begin{theorem}\label{thm:vcn-FPT}
Testing if there is a surjective homomorphism from an
$n$-vertex graph $G$ with $\vc(G)\leq k$ to an $m$-vertex graph $H$ with $\vc(H)\leq k$
can be done in $2^{2^{O(k)}}(nm)^{O(1)}$ time.
\end{theorem}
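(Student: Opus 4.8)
The plan is to follow the integer-linear-programming template already announced in the text: reduce the existence of a surjective homomorphism to a bounded-variable ILP feasibility instance and invoke Lemma~\ref{l:lin}. The structural fact driving the whole argument is that if $\vc(G)\le k$ then $V_G$ splits into a vertex cover $C_G$ with $|C_G|\le k$ and an independent set $I_G=V_G\setminus C_G$, and likewise $V_H=C_H\cup I_H$ with $|C_H|\le k$. An independent-set vertex of $G$ is only constrained by the images of its neighbours, all of which lie in $C_G$; so its ``type'' is determined entirely by which subset of $C_G$ it is adjacent to. There are at most $2^k$ such neighbourhood types, and two vertices of $I_G$ of the same type are interchangeable. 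The same holds on the host side for $I_H$.

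First I would fix, by exhaustive guessing, the restriction of the homomorphism to the cover $C_G$: I enumerate all maps $\varphi\colon C_G\to V_H$ that are homomorphisms on $G[C_G]$. Naively there are $m^k$ such maps, but one can instead guess the image of each cover vertex up to host-type, giving the $2^{2^{O(k)}}$ factor in the stated bound; I will record for each $c\in C_G$ which neighbourhood-type of $H$ its image has, and separately how the cover vertices land among $C_H$ and among the $\le 2^k$ independent types of $H$. Once $\varphi$ is fixed, every vertex $w\in I_G$ must be sent to some $y\in V_H$ adjacent to all of $\varphi(N_G(w))$; the set of admissible targets depends only on the type of $w$ and the fixed $\varphi$. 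The remaining freedom is therefore captured by integer counts: for each pair $(\text{type }t\text{ of }I_G,\ \text{admissible target class }s\text{ in }H)$ a variable $n_{t,s}$ saying how many type-$t$ vertices are routed to class $s$. This is $2^{O(k)}$ variables.

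The constraints are then linear and few. For each type $t$ the outgoing counts must sum to the actual number of vertices of that type in $G$ (a conservation equation), each $n_{t,s}\ge 0$, and the admissibility of each used pair $(t,s)$ is guaranteed by only generating variables for homomorphism-legal pairs under $\varphi$. Surjectivity is the crucial requirement and is what forces the use of counts rather than mere existence of images: every vertex of $H$ must receive at least one preimage. For host vertices in $C_H$ this is checked directly against $\varphi$ and the chosen $n_{t,s}$; for the independent classes of $H$, which may be large, I impose that the total count routed into each individual host vertex is $\ge 1$, which after grouping host vertices by type becomes a handful of $\ge$ inequalities comparing the summed $n_{t,s}$ against the class sizes. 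All told this is a $p$-variable ILP with $p=2^{O(k)}$, solvable by Lemma~\ref{l:lin} in time $2^{2^{O(k)}}\cdot L$ with $L$ polynomial in $\log(nm)$.

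The main obstacle, and the step I would spend the most care on, is the surjectivity bookkeeping on the independent side of $H$. A single independent type of $H$ can contain many vertices that are \emph{not} mutually interchangeable as \emph{targets}, because although they share a neighbourhood in $H$, covering all of them individually is what surjectivity demands; I must make sure the ILP forces a preimage onto each host vertex, not merely onto each host type, while still keeping the number of variables and constraints bounded by a function of $k$ alone. The clean way is to observe that host vertices of the same independent type are interchangeable as targets too (any admissible routing can be rebalanced among them), so it suffices to require that the number of guest vertices sent into a host type is at least the size of that type; equality-style counting then recovers a genuine surjection. Verifying this interchangeability rigorously, and confirming that the guessing of $\varphi$ together with these count constraints exactly characterises when a surjective homomorphism exists, is where the real work lies; the running-time bound then follows immediately by multiplying the $2^{2^{O(k)}}$ guesses by the ILP solving time of Lemma~\ref{l:lin}.
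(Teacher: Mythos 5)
Your proposal is correct and follows essentially the same route as the paper: partition both $G$ and $H$ into at most $k+2^k$ classes (singleton cover vertices plus neighbourhood types of the independent set), encode the surjective homomorphism by integer counts of how many guest-class vertices go to each host class with conservation and $\ge|W_j|$ covering constraints, guess which class pairs are used so that the adjacency/homomorphism condition can be checked combinatorially, and solve the resulting $2^{O(k)}$-variable ILP via Lemma~\ref{l:lin}. The only cosmetic difference is that the paper keeps the guest cover vertices as singleton classes inside the ILP and guesses the support set $R$ of the whole variable matrix, whereas you guess the type-level images of the cover vertices separately; the interchangeability of same-type host vertices that you flag as the delicate point is exactly the observation the paper relies on as well.
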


\begin{proof}
Let $G$ be an $n$-vertex graph with 
a vertex cover $S=\{u_1,\ldots,u_s\}$ of size $s\leq k$.
Then  $I = V_G\setminus S$ is an independent set. 
For every subset $X\subseteq S$, we define $N(X)$ as the set of vertices in $I$ that all have
neighborhood $X$, i.e., $N(X)=\{u\in I\; |\; N(u)=X\}$.
Note that $N(\emptyset)$ is the set of isolated vertices in $I$.

Let $X_1,\ldots,X_r\subseteq S$ be the sets with $N(X_i)\neq\emptyset$.
We let $p=s+r$ and define sets $U_1,\ldots,U_p$ where $U_i=\{u_i\}$ for $i=1,\ldots,s$ and 
$U_i=N(X_{i-s})$ for $i=s+1,\ldots,p$. 
We observe that $p\leq k+2^k$ and that 
$U_1,\ldots,U_p$ is a partition of $V_G$, where 
each $U_i$ is an independent set. 
Moreover, a vertex $v\in U_i$ is adjacent to a vertex $w\in U_j$ if and only if each vertex of $U_i$ is adjacent to each vertex of $U_j$. In that case, we say
that $U_i$ is \emph{adjacent} to $U_j$. We display $G$ in Figure~\ref{f-vc}.

Let $H$ be an $m$-vertex graph with a vertex cover $T=\{v_1,\ldots,v_t\}$ of size $t\leq k$. 
Then $J=V_H\setminus T$ is an independent set, and for each $Y\subseteq T$ we define
$N(Y)=\{z\in J\; |\; N(z)=Y\}$.
Then we define $q\leq k+2^k$ sets $W_1,\ldots,W_q$ where  $W_j=\{v_j\}$ for $j=1,\ldots,t$ and
$W_j=N(Y_{j-t})$ for $j=t+1,\ldots,q$.
We also display $H$ in Figure~\ref{f-vc}. The observations that we made for the $U$-sets are also valid for the $W$-sets.

Now we introduce integer variables $x_{ij}$ for $1\leq i\leq p$ and $1\leq j\leq q$, and observe that there is a surjective mapping (not necessarily a homomorphism) $f\colon V_G\rightarrow V_H$ such that $x_{ij}$ vertices 
of $U_i$ are mapped to $W_j$
if and only if the $x_{ij}$-variables satisfy the system
$$
\left\{
    \begin{array}{rcllllllll}
        x_{ij}&\geq&& 0 &&&&&& i\in\{1,\dots,p\},~j\in\{1,\dots,q\}\\[3pt]
        \sum_{j=1}^qx_{ij}&=&&|U_i|  &&&&&& i\in\{1,\dots,p\}\\ [3pt]
        \sum_{i=1}^px_{ij}&\geq&&|W_j| &&&&&& j\in\{1,\dots,q\}.\\ 
    \end{array}
\right.\eqno{(1)}$$
The mapping $f$ is a homomorphism from $G$ to $H$ if and only if the following holds: for each pair of variables $x_{ij},x_{i'j'}$ 
such that $x_{ij}>0$ and $x_{i'j'}>0$, if $U_i$ is adjacent to $U_{i'}$, then $W_j$ is adjacent to $W_{j'}$. 

We are now ready to give our algorithm.
We first determine the set $S$ and $T$. We then determine the $U$-sets and the $W$-sets.
We guess a set $R$ of indices $(i,j)$ and only allow the variables $x_{ij}$ for $(i,j)\in R$ to get non-zero value. Hence, we set $x_{ij}=0$ for $(i,j)\notin R$. We then check whether for all pairs $(i,j),(i',j')\in R$, if $U_i$ is adjacent to $U_{i'}$, then $W_j$ is adjacent to $W_{j'}$. If not, then we discard $R$ and guess a next one. Else we solve the system (1). If the system has an integer solution, then the algorithm returns {\sc Yes};
otherwise we try a next guess of $R$. If all guesses fail, then the algorithm returns {\sc No}.    

The correctness of the above algorithm follows from the aforementioned observations.
We now estimate the running time. We can find $S$ and $T$ in time 
$1.2738^kn^{O(1)}$ and $1.2738^km^{O(1)}$, respectively~\cite{ChenKX06}.
Then the sets $U_1,\ldots,U_p$ and $W_1,\ldots,W_q$ can be constructed in time 
$1.2738^k(nm)^{O(1)}$. The number of variables $x_{ij}$ is $pq\leq (k+2^k)^2=2^{O(k)}$. This means that  there are at most $2^{2^{O(k)}}$ possibilities to choose $R$. 
By Lemma~\ref{l:lin},  system (1) (with some variables $x_{ij}$ set to be zero) can be solved in time $2^{2^{O(k)}}(nm)^{O(1)}$. Hence, the total running time is $2^{2^{O(k)}}(nm)^{O(1)}$.
This completes the proof of Theorem~\ref{thm:vcn-FPT}.
\qed
\end{proof}

\section{Conclusions}\label{s-con}

Our complexity study shows that the {\sc Surjective Homomorphism} problem is already \NP-complete on a number of very elementary graph classes such as linear forests,
trees of small pathwidth, unions of complete graphs, cographs, split graphs and proper interval graphs.
We conclude that there is not much hope for finding tractable results in this direction, and consider the computational complexity classification of the {\sc Surjective $(-,H)$-Homomorphism} problem as the main open problem; note that  {\sc Surjective $(G,-)$-Homomorphism} is trivially polynomial-time solvable for any guest graph $G$.

As we observed in Section~\ref{s-intro}, the {\sc Surjective $(-,H)$-Homomorphism} problem is \NP-complete already for any fixed host graph $H$ that is nonbipartite. We also mentioned the existence of
a bipartite graph $H$ for which the problem is \NP-complete~\cite{BKM11} and that
the problem can be solved in polynomial time whenever the host graph $H$ is a fixed tree~\cite{GPS11}. The paper of Feder et al.~\cite{FHJKN10} on retractions provides a good starting point for the next step as we explain below.

A {\it pseudoforest} is a graph in which each connected component has at most one cycle. 
The {\sc Retraction} problem is to test whether a graph $G$ retracts to a graph $H$.
Feder et al.~\cite{FHJKN10} consider this problem for graphs that may have self-loops.
Applying their result to simple graphs yields the following. 
For any pseudoforest $H$, the $(-,H)$-{\sc Retraction} problem is  
\NP-complete if $H$ is nonbipartite or contains a cycle on at least $6$ vertices, and it is polynomial-time solvable otherwise.
It is an interesting open problem to show that $(-,H)$-{\sc Retraction} and 
{\sc Surjective $(-,H)$-Homomorphism} are polynomially equivalent for any fixed host graph $H$.
All the evidence so far seems to suggest this.

\end{document}